\documentclass[%
 reprint,
 superscriptaddress,
%groupedaddress,
%unsortedaddress,
% runinaddress,
% frontmatterverbose,
% preprint,
%showpacs,preprintnumbers,
%nofootinbib,
%nobibnotes,
%bibnotes,
 amsmath,amssymb,
 aps,
pra,
%prb,
%rmp,
%prstab,
%prstper,
floatfix,
]{revtex4-1}

\usepackage{graphicx}% Include figure files
\usepackage{dcolumn}% Align table columns on decimal point
\usepackage{bm}% bold math
\usepackage{hyperref}% add hypertext capabilities

%% OWN STUFF
\usepackage{cleveref}
\usepackage{tikz}
\usetikzlibrary{decorations.markings}
\usepackage[caption=false]{subfig}
\usepackage{amsthm,dsfont,adjustbox}
\usepackage[normalem]{ulem}
\usepackage{pdfpages}

\makeatletter
\AtBeginDocument{\let\LS@rot\@undefined}
\makeatother
%% END OWN STUFF

% \theoremstyle{definition}

\newtheorem{prop}{Proposition}[section]

\newtheorem{lem}{Lemma}[section]

\newcommand{\Loop}{\mathrm{Loop}}
\newcommand{\PT}{\mathcal{PT}}
\renewcommand{\P}{\mathcal{P}}
\newcommand{\T}{\mathcal{T}}
\newcommand{\Z}{\mathds{Z}}
\newcommand{\R}{\mathds{R}}

\newcommand\inv[1]{#1\raisebox{1.15ex}{$\scriptscriptstyle-\!1$}}

\begin{document}

\title{Non-abelian nature of systems with multiple exceptional points}

\author{Eric J. Pap}
 \affiliation{Van Swinderen Institute for Particle Physics and Gravity, University of Groningen, Nijenborgh 4, NL-9747 AG Groningen, The Netherlands}%Lines break automatically or can be forced with \\\
 \affiliation{Johann Bernoulli Institute, University of Groningen, Nijenborgh 9, NL-9747 AG Groningen, The Netherlands}
\author{Dani\"el Boer}
 \affiliation{Van Swinderen Institute for Particle Physics and Gravity, University of Groningen, Nijenborgh 4, NL-9747 AG Groningen, The Netherlands}%
\author{Holger Waalkens}
 \affiliation{Johann Bernoulli Institute, University of Groningen, Nijenborgh 9, NL-9747 AG Groningen, The Netherlands}%

\date{\today}

\begin{abstract}
The defining characteristic of an exceptional point (EP) in the parameter space of a family of operators is that upon encircling the EP eigenstates are permuted.
In case one encircles multiple EPs, the question arises how to properly compose the effects of the individual EPs. This was thought to be ambiguous. We show that one can solve this problem by considering based loops and their deformations. The theory of fundamental groups allows to generalize this technique to arbitrary degeneracy structures like exceptional lines in a three-dimensional parameter space.
As permutations of three or more objects form a non-abelian group, the next question that arises is whether one can experimentally demonstrate this non-commutative behavior. 
This requires at least two EPs of a family of operators that have at least 3 eigenstates.
A concrete implementation in a recently proposed $\PT$ symmetric waveguide system is suggested as an example of how to experimentally check the composition law and show the non-abelian nature of non-hermitian systems with multiple EPs.
\end{abstract}

\maketitle

\section{Introduction}
Exceptional points (EPs) are typically considered in systems with a discrete set of eigenstates. The exchange of eigenstates when traversing a closed loop around an EP is its defining characteristic (see e.g.\ \cite{heiss2012physics}). 
The term 'exceptional' was originally used to indicate the presence of a degeneracy in the sense that two or more eigenvalues or levels coincide at an EP (cf.\ \cite{kato}). 
At an EP the characteristic polynomial of the operator has a higher order zero.
Such a degeneracy could arise from a branch point, allowing for permutations of eigenvalues upon encircling.
We take the latter property to define an EP as a degeneracy of a (matrix) operator family such that non-trivial permutations of eigenvalues occur upon following the eigenvalues along a closed loop around the degeneracy. The non-trivial branch structure implies that the matrix family is non-hermitian where at the EP the operator cannot be diagonalized.

Often the concept of $\PT$ symmetry \cite{bender1998real,bender2005introduction} is treated together with EPs. This has two main origins; in physics $\PT$ symmetry is often considered a replacement of hermiticity, and EPs usually mark points where the $\PT$ symmetry becomes spontaneously broken ($\PT$ phase transitions).  Also, one may check that $\PT$ phase transitions share the higher order zero condition with EPs.
However, it turned out that a $\PT$ symmetric system is sometimes equivalent to a hermitian system  \cite{mostafa3,mostafazadeh2003exact}.
One can say that interesting aspects of $\PT$ symmetry may arise at an EP but $\PT$ symmetry is not the main framework to study EPs.

Studies of EPs started primarily with EP2s (e.g.\ \cite{heiss2012physics}), i.e.~EPs where two eigenvalues  are interchanged. Their characteristics are now well-understood. Along a closed loop around an EP2 in the parameter plane two eigenvectors are exchanged with one acquiring a minus sign. This has also been verified experimentally \cite{dembowski2001experimental}. Hence one has to encircle an EP2 at least four times to identically map the eigenvectors, whereas the map of the eigenvalues only needs two turns to become the identity because the sign of the eigenvectors is then irrelevant. This resembles the characteristics of a diabolical point (DP), which is a degeneracy where the eigenvectors remain linearly independent. Upon encircling a DP the eigenvectors are mapped to minus themselves, and an EP2 can therefore be considered to be 'half' of a DP \cite{keck2003unfolding}. 

Recently, EPs with 3 coalescing levels (EP3s) have become of interest. They were already studied explicitly in \cite{demange2011signatures}, and now actual experiments are investigated (cf.\ \cite{schnabel2017ptep3} and refs therein). Here waveguides with gain and loss regions  
provide an optical system that is formally equivalent to a quantum system with a non-hermitian Hamiltonian. Such systems have already been introduced earlier to experimentally study aspects of $\PT$ symmetry \cite{ruter2010observation}. 

In this paper we focus on the interplay of multiple EPs which naturally leads to the question of how the permutations obtained from encircling two or more EPs is related to the permutations associated with loops around the individual EPs.  
This problem was already analyzed in \cite{analysismultEP2012kim} for systems with three levels. It was concluded that a composition of permutations associated with individual EPs cannot be done as the order of the permutations was ambiguous. In this paper we show that a definite answer can be obtained using {\it based loops}, which are oriented loops starting and ending at a fixed base point, and the continuous deformations of these loops as they enter the definition of fundamental groups. 
This will allow us in particular to study non-abelian effects which arise for systems with three or more levels that have two or more exceptional points. 
We illustrate an experimental implementation to test the results using a waveguide system.

The paper is divided into a theoretical analysis of the problem in \cref{sec:theory}, and explicit application in \cref{sec:waveguide_experiment}. In \cref{sec:theory}, we discuss the mathematical background required to deal with multiple EPs, how to solve the composition problem and address the resulting non-abelian effects. Then in \cref{sec:waveguide_experiment} we discuss a wave-guide experiment which allows one to observe this non-abelian nature of multiple EPs.
We end with a summary in \cref{sec:summary}.

%------------------------------------------------------------------------------------------------------------------------------------

\section{Theoretical analysis}
\label{sec:theory}

%-------------------------------------------------------------------------------------------------------------------------------------------------------

\subsection{The problem}\label{sec:problem}
Let us consider a finite-dimensional quantum problem given by a parametrized family of $n\times n$ matrices. 
We do not impose any condition on the matrices, e.g.\ they may be non-hermitian. 
If the eigenvalues locally follow an $N$-sheet branch structure ($N\leq n$), the corresponding branch point is called an EP$N$. The branch structure is then similar to that of the $N$th complex root, revealing cyclic permutations of order $N$. Because of the cyclic property, non-abelian behavior can never occur using a single EP, whatever its order. We note that the eigenvectors may acquire a phase (cf.\ e.g.\  \cite{heiss2012physics}). However, we will disregard phases and consider permutations of eigenvalues only.

We can now state our main question in a more precise fashion.
Consider two EPs encircled by two oriented loops $\gamma_1$ and $\gamma_2$, respectively, as depicted in \cref{fig:theoretical_setup}.
 Suppose one has measured the permutations obtained from the loops $\gamma_1$ and $\gamma_2$. 
Which permutation should one obtain for a loop $\gamma_3$ encircling both EPs?

\begin{figure}[h]
    \centering
    \includegraphics{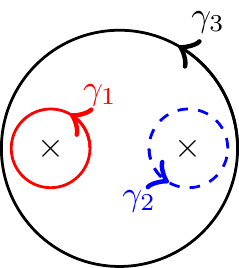}
    \caption{ 
    Two EPs (crosses) encircled individually by  loops $\gamma_1$ and $\gamma_2$, respectively,  and a loop $\gamma_3$ which encircles both EPs.}
    \label{fig:theoretical_setup}
\end{figure}

%-------------------------------------------------------------------------------------------------------------------------------------------------------

\subsection{Basepoint solution}
The essential insight is that one first needs to fix a common base point for the paths $\gamma_1$, $\gamma_2$ and $\gamma_3$; the permutations should be compared for the same initial system parameters. Let us denote by $\Delta$ the discriminant set of the family of operators, i.e.~$\Delta$ is the set of parameters for which two or more eigenvalues coincide, and by $X$ the complement of $\Delta$. Fixing a basepoint $x_0\in X$ we can consider the measurement paths that start and end at $x_0$, that is, the loops based at $x_0$. Let us denote by $\Loop(x_0)$ the set of oriented loops in $X$ that are based at $x_0$. 
As each loop in $\Loop(x_0)$ is contained in $X$ the operators have distinct eigenvalues at any point on such a loop. Tracing a loop $\gamma\in\Loop(x_0)$ induces a fixed change of eigenstates, in particular it induces a permutation $p_\gamma$ of the eigenvalues.
Denote by 
\begin{equation} \label{eq:def_lambda_group}
    \Lambda(x_0)=\{p_\gamma \;| \;\gamma\in\Loop(x_0)\}
\end{equation}
the group of permutations that can be achieved in such a way. This is a subgroup of the symmetric group
of the $n$ distinct eigenvalues, and by using a labeling is isomorphic to a subgroup of $S_n$. 

A group like $\Lambda(x_0)$ was already mentioned in the book by Kato \cite{kato} where the term exceptional point was used for the first time, and the group was called the $\lambda$-group. The $\lambda$-group there consists of the permutations that arise from analytically continuing the eigenvalues back to some initial point. The group $\Lambda(x_0)$ is a generalization by allowing for more general adiabatic connections; details of such a geometric connection can be found in \cite{moshdgeo}. To describe $\Lambda(x_0)$ more rigorously; by parallel transport each loop $\gamma\in\Loop(x_0)$ induces a linear map on state space, which by the adiabatic assumption maps eigenstates to eigenstates. The group $\Lambda(x_0)$ is then obtained from the holonomy group at $x_0$ by restricting to the permutations of the eigenstates.
We point out that the formalism does not require the complex analytic theory of Riemann sheets. It is sufficient that the eigenvalues vary smoothly with system parameters. This allows us to use operators that also involve complex conjugates which are excluded in the complex analytic case.

The concatenation of two loops in $\Loop(x_0)$ defines a 'product' in $\Loop(x_0)$ that is in general non-abelian. By the holonomy interpretation, the assignment $\gamma\mapsto p_\gamma$ preserves this product in the sense that
\begin{equation}
    p_{\gamma_2\gamma_1}=p_{\gamma_2}\circ p_{\gamma_1}
\end{equation}
where in $\gamma_2\gamma_1$ we first track $\gamma_1$ and then $\gamma_2$.

It is at this point that the distinction between $\gamma_2\gamma_1$ and $\gamma_1\gamma_2$ becomes interesting. This is because we are mapping the loops to a permutation group that in general is non-commutative.
In fact, as (based) loops the two products $\gamma_2\gamma_1$ and $\gamma_1\gamma_2$  may be different in the sense that they are not homotopic relative to the basepoint. 
The loops $\gamma_1 $ and $ \gamma_2$ both start and end at the basepoint $x_0$. The concatenation $\gamma_2\gamma_1$ is a loop that starts at $x_0$ and following $\gamma_1$ intermediately comes back to $x_0$, after which $\gamma_2$ is traversed which again ends at $x_0$.  A homotopic deformation of $\gamma_2\gamma_1$ as a loop in $\Loop(x_0)$ is a continuous deformation of the concatenation $\gamma_2\gamma_1$ within $X$ that keeps the starting point of $\gamma_1$ and the end point of $\gamma_2$ fixed; the intermediate visit of $x_0$ becomes irrelevant. This applies analogously to the product $\gamma_1\gamma_2$.
In \cref{fig:basepoint_on_loop} we show continuous deformations of $\gamma_1\gamma_2$ and $\gamma_2\gamma_1$. We in particular see that one cannot deform $\gamma_2\gamma_1$ to $\gamma_1\gamma_2$ within $X$ if one needs to keep the basepoint fixed.

\begin{figure}%[h]
    \subfloat[Two paths around two EPs.
    \label{sfig:twopaths}
    ]{
    \includegraphics{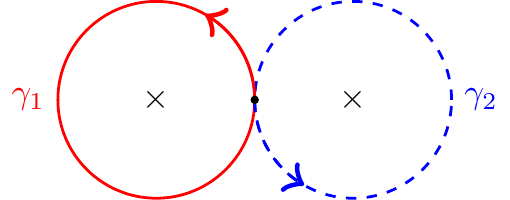}
    }
        
    \subfloat[Deformed \(\gamma_1\gamma_2\).
    \label{sfig:g1g2}]{
    \includegraphics{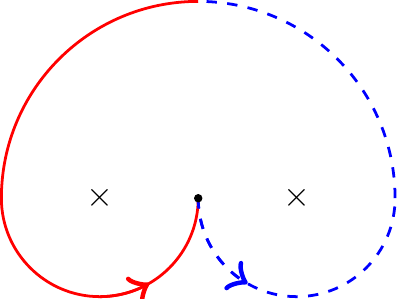}
    }
    
    \subfloat[Deformed \(\gamma_2\gamma_1\).
    \label{sfig:g2g1}]{
    \includegraphics{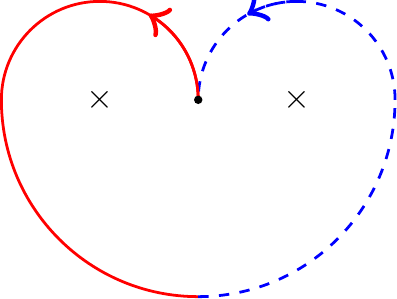}
    }
    
    \caption{Example of loops based at the bold dot that enclose two EPs marked as crosses. The deformed $\gamma_1\gamma_2$ and $\gamma_2\gamma_1$ resemble each other, but are not homotopic relative to the base point.}
    \label{fig:basepoint_on_loop}
\end{figure}

The continuous deformation of the based loops $\gamma\in \Loop(x_0)$ and their concatenation leads to the definition of the fundamental group $\pi_1(X,x_0)$ whose elements are the equivalence classes $[\gamma]$ of loops that are homotopic to a representative $\gamma$ and where the group operation is defined by the product given by the concatenation of loops.
The situation depicted in \cref{fig:basepoint_on_loop} is then general. By the theory of fundamental groups, once fundamental paths are chosen, any loop can be written in terms of these.
So far we used deformations to stress the non-commutativity of the product of two based loops. In the next subsection we will see that deformations are also relevant for the concrete question of calculating permutations.

Let us now come back to the question posed in \cref{sec:problem}, see \cref{fig:theoretical_setup}. 
First, we choose basepoints $x_1$, $x_2$ for the small loops $\gamma_1$ and $\gamma_2$, respectively,  and $x_3$ for the big loop $\gamma_3$. The basepoints $x_1$ and $x_2$ are likely to be different. In this case we choose an oriented path $b$ from $x_1$ to $x_2$ as shown in \cref{fig:connecting_path} which allows us relate the based loops $\gamma_1$ and $\gamma_2$ in the sense 
that a loop $\gamma_2\in\Loop(x_2)$ can be associated with a loop $\inv{b}\gamma_2 b\in\Loop(x_1)$. For convenience, we will refer to this operation as pull-back via $b$, and call $b$ a bridge from $x_1$ to $x_2$. 

As shown in \cref{fig:relabelling_path}, a bridge induces a fixed labelling of eigenvalues, and permutations must be rewritten accordingly. This means, if one wants to talk about the permutation 'first $\gamma_1$, then $\gamma_2$' one has to pick basepoints, and if these do not coincide also bridge(s). 
For the relabelling illustrated in \cref{fig:relabelling_path}, suppose, e.g., that $\gamma_2$ induces the permutation $(1'2'3')$ (where we use the cycle notation). Then pulling-back via $b$ one obtains $1\mapsto 2'\mapsto 3'\mapsto3$, $2\mapsto1'\mapsto2'\mapsto1$ and $3\mapsto3'\mapsto 1'\mapsto2$, that is, $\inv{b}\gamma_2b$ induces $(132)$.

\begin{figure}%[h]
    \subfloat[Two loops with different basepoints. A bridge is needed to pull-back information from one point to the other. This requires that the bridge is traversed twice and in opposite directions.\label{fig:connecting_path}]{
    \includegraphics{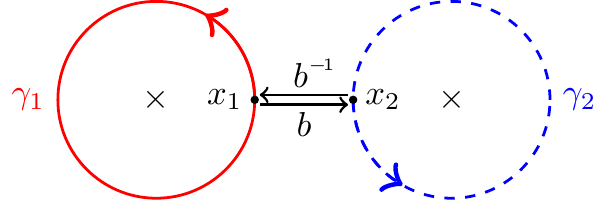}
    }
    
    \subfloat[Example of eigenvalue paths induced from a bridge.
    Open circles mark the eigenvalues at $x_1$ and have unprimed labels, filled circles mark eigenvalues at $x_2$ and have primed labels. The bridge induces the relabelling $1\mapsto 2'$, $2\mapsto 1'$, $3\mapsto3'$.\label{fig:relabelling_path}
    ]{\includegraphics{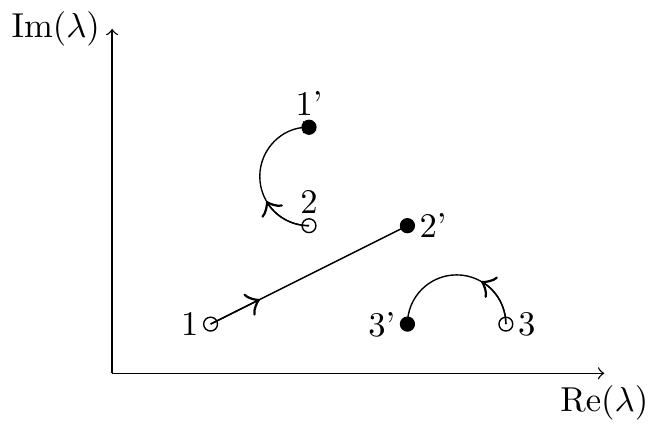}
    }
    
    \caption{Aspects of a connecting path or bridge, both in parameter and eigenvalue space.}
    \label{fig:aspects_connecting_path}
\end{figure}

Using a bridge from $x_1$ to $x_3$, the pull-back allows us to continue as if $x_3=x_1$, a choice that would be convenient in experiment as well. By the previous discussion, the big loop can be decomposed into the smaller loops, and composing permutations accordingly yields a unique permutation for the big loop, given the labeling and bridges used. Observe that this technique may also be used to keep track of the occurring geometric phases. The various dependencies that occur we discuss in \cref{sec:remarks}.

%------------------------------------------------------------------------------------------------------------------------------------------------------------

\subsection{Permutations are topological}
We now turn to the fact that permutations induced by loops around EPs are topological in nature, as opposed to geometric. This means that based loops that are homotopic induce the same permutation of eigenvalues. This does not assume anything on the nature of the degeneracies, i.e.\ whether they are EPs, DPs or yet another type. Note that this fact also allows one to pick the most convenient loop in a homotopy class, without any theoretical requirements on the quantum system.

\begin{lem}
    If $\gamma,\tilde\gamma\in\Loop(x_0)$ are homotopic relative to $x_0$, then the induced permutations are equal, i.e.\ $p_\gamma=p_{\tilde\gamma}$. In other words, the assignment $\gamma\mapsto p_\gamma$ factors as
    \begin{equation}
        \begin{split}
            \Loop(x_0)\to \; \pi_1(&X,x_0) \to \Lambda(x_0)\\
            \gamma\mapsto &[\gamma]\mapsto p_\gamma
        \end{split}
    \end{equation}
    where each map preserves products.
\end{lem}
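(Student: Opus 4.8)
The plan is to reduce the statement to the homotopy invariance of monodromy on a covering space, and to read off the ``preserves products'' clause from what is already established. Since the excerpt has shown that $\gamma\mapsto p_\gamma$ respects concatenation, i.e.\ $p_{\gamma_2\gamma_1}=p_{\gamma_2}\circ p_{\gamma_1}$, the only genuinely new content is that $p_\gamma$ depends on $\gamma$ solely through its class $[\gamma]$ in $\pi_1(X,x_0)$. Granting this, the map $[\gamma]\mapsto p_\gamma$ is well defined, the first arrow $\gamma\mapsto[\gamma]$ preserves products by the very definition of the group operation on $\pi_1$, and the second arrow is a homomorphism because $[\gamma_2][\gamma_1]=[\gamma_2\gamma_1]\mapsto p_{\gamma_2}\circ p_{\gamma_1}$. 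So I would spend the proof establishing $p_\gamma=p_{\tilde\gamma}$ for homotopic $\gamma,\tilde\gamma$.

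My main approach would be to exhibit the eigenvalues over $X$ as an $n$-sheeted covering. Let $E\subset X\times\Co$ be the set of pairs $(x,\lambda)$ with $\lambda$ an eigenvalue at $x$, and $\pi\from E\to X$ the projection. On $X$ the spectrum is simple, so every fibre has exactly $n$ points, and the smooth dependence of the eigenvalues lets one pick, near any $x\in X$, $n$ disjoint continuous branches, trivializing $\pi$ over a neighborhood; hence $\pi$ is a covering map. By construction $p_\gamma$ is precisely the monodromy action of $\gamma$ on the fibre $\pi^{-1}(x_0)$: lifting $\gamma$ along the sheets and recording where each starting eigenvalue lands is exactly what $p_\gamma$ encodes. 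The path- and homotopy-lifting properties of coverings then say that the endpoint of each lift is unchanged under a homotopy of $\gamma$ rel $x_0$, which gives $p_\gamma=p_{\tilde\gamma}$.

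Equivalently, and closer to the holonomy language used above, I would argue by connectedness. Given a homotopy $H\from[0,1]\times[0,1]\to X$ with $H(\cdot,0)=\gamma$, $H(\cdot,1)=\tilde\gamma$ and $H(0,s)=H(1,s)=x_0$, let $p_s$ be the permutation induced by the loop $H(\cdot,s)$. Parallel transport depends continuously on the loop, whereas the induced permutation is a finite, discrete datum; therefore $s\mapsto p_s$ is locally constant, and since $[0,1]$ is connected it is constant, so $p_0=p_1$. Both routes use only the existence of continuous local eigenvalue branches, so neither assumes anything about whether the degeneracies in $\Delta$ are EPs, DPs, or of another type, matching the generality claimed.

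The step I expect to be the main obstacle is the local triviality of $\pi$, that is, turning ``the eigenvalues are distinct and vary smoothly on $X$'' into an honest $n$ disjoint continuous branches on a neighborhood of each point; equivalently, in the connectedness version, proving that $s\mapsto p_s$ really is locally constant. This is where the hypotheses of simple spectrum on $X$ and smooth parameter dependence are actually consumed, and it would be handled through continuity of the roots of the characteristic polynomial together with continuity of the spectral projection attached to a simple eigenvalue. Once this is in place, the monodromy theorem (or the local-constancy-plus-connectedness argument) finishes the proof with no further input.
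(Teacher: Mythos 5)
Your proposal is correct, and in fact it contains two arguments: your ``equivalent'' connectedness version is essentially the paper's own proof, while your primary covering-space route is genuinely different. The paper argues in two sentences: a labeling identifies $\Lambda(x_0)$ with a subgroup of $S_n$, hence a discrete group, and the homotopy from $\gamma$ to $\tilde\gamma$ induces a homotopy from $p_\gamma$ to $p_{\tilde\gamma}$, which by discreteness is constant --- exactly your local-constancy-plus-connectedness argument. Your main route instead realizes the eigenvalues as an $n$-sheeted covering $\pi\from E\to X$, $E=\set{(x,\lambda)\in X\times\Co}{\lambda\text{ an eigenvalue at }x}$, identifies $p_\gamma$ with the monodromy action on the fibre $\pi^{-1}(x_0)$, and quotes homotopy lifting. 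This is sharper in two respects: it isolates exactly where the hypotheses (simple spectrum on $X$, continuity of the roots of the characteristic polynomial) are consumed, namely in the local triviality of $\pi$, and it replaces the paper's implicit appeal to continuity of the induced permutation as a function of the loop --- which the paper asserts but does not justify --- by the standard uniqueness and homotopy-lifting properties of coverings. The one step you should make explicit is the identification of the paper's $p_\gamma$, which is defined via parallel transport of \emph{eigenstates} (holonomy), with the covering monodromy of \emph{eigenvalues}: since transport maps eigenstates to eigenstates and the corresponding eigenvalue varies continuously along $\gamma$, the eigenvalue path of a transported eigenstate is a lift of $\gamma$ to $E$, and uniqueness of lifts forces it to coincide with the covering lift. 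With that line added, your argument is complete; what the paper's proof buys is brevity given the holonomy setup already in place, and what yours buys is self-containedness and the recognition that the claimed factorization through $\pi_1(X,x_0)$ is precisely the standard monodromy representation of a covering.
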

\begin{proof}
We already remarked that a labeling induces an isomorphism between $\Lambda(x_0)$ and a subgroup of $S_n$. As $S_n$ is discrete, so is $\Lambda(x_0)$. The homotopy from $\gamma$ to $\tilde\gamma$ induces a homotopy from $p_\gamma$ to $p_{\tilde\gamma}$, which by discreteness is constant.
\end{proof}

Let us propose a procedure for checking the composition rule in the situation of a planar parameter space, where we consider a loop that encircles $k$ EPs, each with winding number 1 which intuitively means that each EP is encircled exactly once. Homotopy theory allows to extend such a procedure to higher dimensional parameter spaces where the degeneracies are of codimension 2 like the exceptional lines in the three-level system that we consider in \cref{sec:waveguide_experiment}. A measurement could proceed according to the following steps
\begin{enumerate}
    \item fix a loop $\gamma$ encircling all EPs once, and choose a base point $x_0$ on this loop,
    \item write $[\gamma]=[\gamma_k]\cdots [\gamma_1]$ where each $\gamma_i\in \Loop(x_0)$ encircles a single EP with winding number 1,
    \item measure the permutations $p_i:=p_{\gamma_i}$ and $p:=p_\gamma$,
    \item check $p$ and $p_k\cdots p_1$ for equality.
\end{enumerate}

Non-abelian behavior occurs if there are two loops $\gamma_1,\gamma_2$ such that $p_{\gamma_2}\circ p_{\gamma_1}\ne p_{\gamma_1}\circ p_{\gamma_2}$. As $S_n$ is commutative for $n<3$, a system in which this is possible should have $n\geq 3$ many levels. Note that EP3s are not required to see non-abelian behaviour.  Instead it is sufficient to have a system with three levels and two EP2s with permutations $(12)$ and $(23)$, respectively.

Another observable property is orientation dependence, e.g.\ by comparing a loop encircling two EPs with a figure 8 shaped partner loop, or more precisely, compare the permutation along the loop $\gamma_2\gamma_1$ shown in \cref{fig:basepoint_on_loop} with, e.g., $\inv{\gamma_2}\gamma_1$. As opposed to the previous construction the present one requires EP$N$s with $N\ge3$. This is due to the fact that for an EP2 the permutation is always a transposition and hence equals its own inverse.

%------------------------------------------------------------------------------------------------------------------------------------------------------------

\subsection{Examples}

Let us determine the $\Lambda$-group for some well-known cases. Encircling a single EP$N$ once yields an $N$-cycle, and one has $\Lambda(x_0)\cong \Z/N\Z$. This identification does not depend on $x_0$ if the parameter space is path-connected, which we assume for simplicity, but the precise eigenvalues permuted do depend on $x_0$. 
Consider now 2 EP2s, then there are a number of different possibilities depending on how the eigenvalue sheets are connected and whether the EPs are not located at the same point in parameter space, that is, if they can be circumscribed individually.

Suppose the EP2s share no sheet. Then the system must have at least 4 distinct eigenvalues, and we may take a 4-dimensional (sub)system. Fixing a basepoint and a labelling, we may assume that the EPs have permutations $(12)$ resp.\ $(34)$. If the EPs are at different locations, we can permute independently and one has $\Lambda(x_0)\cong S_2\times S_2$. In case the EPs are on top of each other, the only non-trivial permutation is $(12)(34)$, so $\Lambda(x_0)\cong S_2$.

Suppose the EP2s share one sheet. Then the system can be taken 3-dimensional, and the permutations as $(12)$ and $(23)$. These two transpositions generate $S_3$, and hence $\Lambda(x_0)\cong S_3$ as this is the largest group that can be obtained with a 3 dimensional system. We note that encircling both EP2s in the right order yields $(12)(23)=(123)$, and as \cite{analysismultEP2012kim} showed, doing this 3 times yields the identity. For another loop, one may have the opposite order and measure $(23)(12)=(132)$, again a 3-cycle. However, we stress that using the theory introduced in this paper we can calculate the outcome after encircling just once. This is crucial in showing non-abelian behavior as this manifests itself in the difference between $(123)$ and $(132)$, which are both 3-cycles. To conclude this case, if the EP2s would lie on top of each other the resulting structure would look like an EP3, which we treated above.

Finally, suppose the EP2s share both sheets, as happens in the standard case (e.g.\ \cite{heiss2012physics}). Now the system can be taken 2-dimensional with both permutations equal to $(12)$. Hence $\Lambda(x_0)\cong S_2$, similar to taking just 1 EP, and note that the EPs cannot be on top of each other without becoming equal.

We see that the $\Lambda$-group detects the differences in sheet structure. If we include more EPs or allow higher order EP$N$s one can reason similarly, be it with more involved permutations.

%------------------------------------------------------------------------------------------------------------------------------------------------------------

\subsection{Remarks}
\label{sec:remarks}

Let us inspect how the exposition above depends on choices such as basepoints and bridges. We remark that this is similar to the discussion that two fundamental groups
$\pi_1(X,x_0)$ and $\pi_1(X,x_1)$ with different base points $x_0$ and $x_1$ are isomorphic by a conjugation-like construction, the conjugacy provided by a bridge between $x_0$ and $x_1$.

Concerning the basepoint, choosing one fixes a path-connected component of $X$. Within this component, bridges can be used to connect different basepoints, relating $\Loop$-spaces by conjugation (for the eigenvalues, standard bookkeeping of the labels appears). 
We disregard the case where $X$ is not path-connected as it is in general not meaningful to compare levels associated with parameters in different connected components of $X$ because of the absence of a continuous dependence of the levels on the parameters.

Given two bridges $b,\tilde{b}$ between the same basepoints, the results of pull-back of a loop may very well differ. Key is the loop $\inv{\tilde{b}}b\in\Loop(x_0)$, which may yield a non-trivial permutation. Indeed, $b$ and $\tilde{b}$ may pass an EP on different sides, such that we approach the final point using different sheets, where the loop will indeed reveal this EP permutation. Again, there is uniqueness up to conjugation, as made precise in the next lemma.

\begin{lem}
    Let $x_0$ and $x_1$ be basepoints, let $b,\tilde{b}$ be bridges from $x_0$ to $x_1$. The two pull-back operations are related by conjugation with the permutation of $\inv{b}\tilde{b}$.
\end{lem}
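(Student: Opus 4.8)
The plan is to realize this statement as the restriction to permutations of the standard change-of-basepoint isomorphism for fundamental groups, and to extract the conjugating element explicitly. First I would pass to homotopy classes throughout: by the previous lemma the assignment $\gamma\mapsto p_\gamma$ factors through $\pi_1(X,x_0)$ and is product-preserving, so $p_{\alpha\beta}=p_\alpha\circ p_\beta$ and $p_{\inv{\alpha}}=p_\alpha^{-1}$ for based loops $\alpha,\beta$ at $x_0$. Recall that pull-back via $b$ sends a loop $\gamma\in\Loop(x_1)$ to $\inv{b}\gamma b\in\Loop(x_0)$, while pull-back via $\tilde b$ sends it to $\inv{\tilde b}\gamma\tilde b$. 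It therefore suffices to compare the two based loops $\inv{b}\gamma b$ and $\inv{\tilde b}\gamma\tilde b$ inside $\pi_1(X,x_0)$, for an arbitrary $\gamma\in\Loop(x_1)$.

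The key step is a purely algebraic identity obtained by inserting homotopically trivial backtracks at $x_1$. Setting $c:=\inv{b}\tilde b\in\Loop(x_0)$, whose inverse is $\inv{c}=\inv{\tilde b}b$, I would write, up to homotopy relative to $x_0$,
\begin{equation}
    \inv{b}\gamma b \;=\; (\inv{b}\tilde b)\,(\inv{\tilde b}\gamma\tilde b)\,(\inv{\tilde b}b) \;=\; c\,(\inv{\tilde b}\gamma\tilde b)\,\inv{c}.
\end{equation}
Reading the middle expression right-to-left, the two segments $\inv{\tilde b}$ and $\tilde b$ meeting at $x_1$ cancel pairwise as backtracks, so that exactly the bridge $b$, the loop $\gamma$, and $\inv{b}$ survive, recovering the pull-back via $b$. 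This is the one place where the orientation conventions must be handled with care, and I expect it to be the only real obstacle: one must check that it is the loop $\inv{b}\tilde b$ (and not its inverse $\inv{\tilde b}b$) that appears on the left of $\inv{\tilde b}\gamma\tilde b$, which is forced by the convention that in a product the rightmost factor is traversed first.

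Finally I would apply the homomorphism $p$ to this identity and use $p_{\inv{c}}=p_c^{-1}$, giving
\begin{equation}
    p_{\inv{b}\gamma b} \;=\; p_{\inv{b}\tilde b}\circ p_{\inv{\tilde b}\gamma\tilde b}\circ p_{\inv{b}\tilde b}^{-1}.
\end{equation}
Since $\gamma\in\Loop(x_1)$ was arbitrary, this exhibits the pull-back operation via $b$ as the conjugate of the pull-back operation via $\tilde b$ by the permutation $p_{\inv{b}\tilde b}$ of $\inv{b}\tilde b$, which is the claim. The geometric content is precisely the observation made just before the lemma: if $b$ and $\tilde b$ pass an EP on opposite sides, then $\inv{b}\tilde b$ is a nontrivial loop carrying exactly that EP's permutation, and this permutation is what relabels the two pull-backs.
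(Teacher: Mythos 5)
Your proof is correct and follows essentially the same route as the paper: the paper's proof likewise inserts the backtracks $b\inv{b}$ (resp.\ $\tilde{b}\inv{\tilde{b}}$) at $x_1$ to obtain the homotopy identity relating $\inv{b}\gamma b$ and $\inv{\tilde{b}}\gamma\tilde{b}$ as a conjugation by $\inv{b}\tilde{b}$ in $\Loop(x_0)$, and then applies the product-preserving assignment $\gamma\mapsto p_\gamma$. Your write-up merely makes explicit the orientation bookkeeping and the final application of the homomorphism, which the paper leaves as ``the claim now follows.''
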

\begin{proof}
    For \(\gamma\in\Loop(x_1)\) arbitrary, one has \(\inv{\tilde{b}}\gamma\tilde{b}\) homotopic to \(\inv{(\inv{b}\tilde{b})}(\inv{b}\gamma b)(\inv{b}\tilde{b})\), where all factors are in $\Loop(x_0)$. The claim now follows.
\end{proof}

Note that conjugation in $S_n$ leaves the cycle structure invariant, so one may think that a permutation depends only on the loop (and by the above, only the homotopy class). This holds true for the cycle structure, but one should still be careful when using concrete labeling, which varies even per basepoint.

Let us also discuss coordinate dependence. When reparametrizing parameter space, we assume that the reparametrization establishes a homeomorphism of the original parameter space. 
This induces a homeomorphism of the non-degeneracy space $X$, and so loops in one parametrization correspond to loops in the other. Also here a conjugation-like correspondence appears. It does supply another reason that deformations, even of the degeneracies themselves, do not change the physical aspects.

We emphasize that the exposition does not include any assumptions on the operators. However, it is well-known that hermitian systems do not allow for EPs. This is usually proven by the non-existence of a complete set of orthonormal eigenstates \emph{at} an EP. Using the techniques above, we may provide a more topological proof, where we only need to look \emph{around} the EP. More concretely, one may show non-existence of EPs by showing that $\Lambda(x_0)$ is trivial, as done in the next proposition. The premise is satisfied for any hermitian family and also includes exact $\PT$-symmetric systems \cite{mostafazadeh2003exact}.

\begin{prop}
    Let $T(x)$ be a family of $n\times n$ matrix operators. If $T(x)$ has real eigenvalues for any $x\in X$, then $\Lambda(x_0)=0$ for all $x_0\in X$.
\end{prop}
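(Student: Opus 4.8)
The plan is to exploit the total ordering of $\R$. At every $x\in X$ the operator $T(x)$ has $n$ eigenvalues that are simultaneously distinct (since $x\notin\Delta$) and real (by hypothesis), so they admit a unique strictly increasing arrangement $\lambda_1(x)<\cdots<\lambda_n(x)$. The underlying idea is that such an ordering cannot be scrambled by a loop that never meets a degeneracy, so every induced permutation must be trivial.

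First I would fix an arbitrary loop $\gamma\in\Loop(x_0)$, parametrized as $\gamma\from[0,1]\to X$ with $\gamma(0)=\gamma(1)=x_0$. By the adiabatic continuation that underlies the definition of $p_\gamma$, tracing $\gamma$ follows each eigenvalue continuously, producing $n$ continuous real functions $t\mapsto\mu_i(t)$ that at every $t$ form a relabelling of $\{\lambda_1(\gamma(t)),\dots,\lambda_n(\gamma(t))\}$.

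The central step is to show that the order of the branches is preserved along the loop. Take two tracked eigenvalues with $\mu_i(0)<\mu_j(0)$. Since $\gamma(t)\in X$ for all $t$, the eigenvalues stay distinct, so the continuous function $\mu_j(t)-\mu_i(t)$ never vanishes; by the intermediate value theorem it keeps its sign, giving $\mu_i(t)<\mu_j(t)$ throughout. Consequently the rank of each branch in the sorted list is a locally constant, hence constant, integer in $t$. Evaluating at $t=1$ returns us to $x_0$ with exactly the same unordered eigenvalue set as at $t=0$; since every branch retains its rank, the branch that began as the $k$-th smallest comes back as the $k$-th smallest, i.e.\ to its original value. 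Therefore $p_\gamma=\id$, and as $\gamma$ was arbitrary, $\Lambda(x_0)=0$.

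I expect the only genuine subtlety to be the continuity of the individual branches $\mu_i$, as opposed to merely the continuity of the eigenvalue \emph{set}; without a consistent tracking the notion of a ``permutation'' would not even be well defined. This is precisely the adiabatic/holonomy continuation already built into the definition of $p_\gamma$, so once that input is granted the order-preservation argument is entirely elementary. A final point worth stating explicitly is that nothing in the argument used any special feature of $x_0$, so the conclusion holds for every basepoint $x_0\in X$ at once, exactly as claimed.
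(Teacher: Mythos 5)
Your proof is correct and takes essentially the same approach as the paper: both exploit that real, everywhere-distinct eigenvalues cannot exchange their order along a loop in $X$, using the intermediate value theorem applied to differences of continuously tracked eigenvalue branches. The only difference is presentational --- you argue directly that every branch preserves its rank, while the paper argues by contradiction via the smallest eigenvalue that gets permuted.
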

\begin{proof}
    Let $\gamma\in\Loop(x_0)$ be any loop in $X$, denote by $\lambda_i(t)$ the induced path of the $i$\textsuperscript{th} eigenvalue. By assumption, each $\lambda_i(t)$ moves on the real axis, and we may label eigenvalues such that $\lambda_i(0)<\lambda_j(0)$ whenever $i<j$, where being in $X$ allows for the strict inequalities.
    
    Assume a non-trivial permutation is achieved, so we may consider the smallest eigenvalue (label $i$) that gets permuted. Observe that a bigger eigenvalue (label $j$) must take its place; that is $\lambda_i(0)<\lambda_j(0)$, yet $\lambda_i(1)>\lambda_j(1)$. By the Intermediate Value Theorem, one must have $\lambda_i(t^*)=\lambda_j(t^*)$ for some $t^*\in(0,1)$. However, this implies a degeneracy which contradicts $\gamma$ being in $X$.
\end{proof}

In conclusion, we have found that encircling two or more EPs requires based oriented loops to answer the question of how the resulting permutation is composed from the permutations associated with the individual EPs.
We also described how to relate the results for different base points. In the next section we will describe a very concrete example using an experimental setup that could be used to test the results.

%------------------------------------------------------------------------------------------------------------------------------------

\section{Proposed experiment}
\label{sec:waveguide_experiment}

\subsection{Setup}

In \cite{ruter2010observation}, two waveguides are considered that are placed next to each other and coupled. They are subjected to laser pumping giving rise to a $\PT$ symmetric system in which $\PT$ phase transitions could be observed. In this paper, we investigate a three waveguide system like in \cite{schnabel2017simple}, see also the schematic picture in \cref{fig:setup}. Laser pumping induces complex refractive indices, which translates to a complex potential $V_k=N_k+iP_k$ in the $k$th channel, where $N_k=k_0n_k$ is the real refractive index part and $P_k=k_0\gamma_k/2$ the effective pumping part.

\begin{figure}[h]
    \centering
    \includegraphics{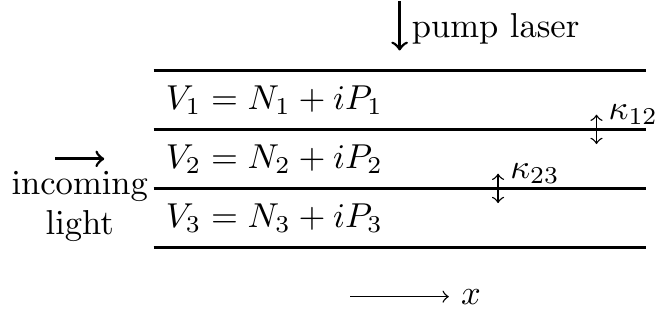}
    \caption{Schematic view of the experimental set-up (see the text).}
    \label{fig:setup}
\end{figure}

By placing channels next to each other (real) couplings $\kappa_{12},\kappa_{23}$ are induced, which depend on the coupling lengths between the channels.
The electric field amplitudes $E_k$ change along the propagation direction $x$ as (see \cite{schnabel2017ptep3} for experimental details)
\begin{equation}
    \resizebox{0.95\hsize}{!}{$
    i\frac{d}{dx}\begin{pmatrix}E_1\\E_2\\E_3\end{pmatrix}=\begin{pmatrix}V_1+iP_1&-\kappa_{12}&0\\-\kappa_{12}&V_2+iP_2&-\kappa_{23}\\0&-\kappa_{23}&V_3+iP_3\end{pmatrix}\begin{pmatrix}E_1\\E_2\\E_3\end{pmatrix}$}.
\end{equation}

Let us redefine fields and measure relative to the central channel 2.  Setting $v_k=V_k-V_2$, $p_k=P_k-P_2$, and taking equal couplings $\kappa_{12}=\kappa_{23}=\kappa$, the scaled fields $\widetilde{E}_k(x) = e^{i(V_2+iP_2)x}E_k(x)$ satisfy $i\frac{d}{dx}{\mathbf {\tilde{E}}}  = H \, {\mathbf {\tilde{E}}} $, where $ {\mathbf {\tilde{E}}} =(\widetilde{E}_1,\widetilde{E}_2,\widetilde{E}_3)^T$ and $H$ is the operator
\begin{equation}
    H=\begin{pmatrix}v_1+ip_1&-\kappa&0\\-\kappa&0&-\kappa\\0&-\kappa&v_3+ip_3\end{pmatrix}
\end{equation}
which is similar to the idealized expression found in \cite{schnabel2017simple}. The electric field components hence satisfy a Schr\"odinger type equation with a non-hermitian operator where the role of time $t$ is played by the spatial direction $x$. 

We will restrict ourselves to the subspace of operators that are of the form
\begin{equation}
    T(z,c)=\begin{pmatrix}z+2i&-\sqrt{2}&0\\-\sqrt{2}&0&-\sqrt{2}\\0&-\sqrt{2}&cz-2i\end{pmatrix} , 
\end{equation}
where $z$ is a complex and $c$ a real parameter. The cases $c=\pm1$ were investigated in \cite{demange2011signatures}, where it was shown that these are normal forms for EPs appearing in 3 dimensional systems. It was found that for $c=1$ the system has an EP3 at $z=0$, while for $c=-1$ the system has an EP2 at $z=0$. We note that $z$ is up to an offset the potential in channel 1, and $c$ is the ratio $(n_3-n_2)/(n_1-n_2)$. Observe that the whole $c$-axis ($z=0$) is mapped to the same operator. This will return in pictures in the next subsection.

%------------------------------------------------------------------------------------------------------------------------------------

\subsection{The parameter space and the discriminant set}
The parameter space of the system is the space $\mathds{C}\times\R$ with coordinates $(z,c)$. The EPs of the system are given by the parameters $(z,c)$ for which the eigenvalues of $T(z,c)$ coalesce in a branch point singularity. One can find candidates for EPs by finding higher order zeros of the characteristic polynomial $p_{z,c}(\lambda)=\det(\lambda I-T(z,c))$. The parameter space thus decomposes into a degeneracy space $\Delta$ and a non-degeneracy space $X$. $\Delta$ is then given by the discriminant set 
\[
\Delta = \{(z,c)\in \mathds{C}\times \R\;|\; \text{discrim}(p_{z,c}(\lambda),\lambda)= 0\},
\]
which forms lines in the three-dimensional parameter space. EPs can only be found on these lines, e.g.\ by finding higher order zeros of $p_{z,c}(\lambda)$ or by numerically tracking the eigenvalues along a closed loop. For the latter technique, we remark that deformation invariance of permutations allows one to check large pieces of $\Delta$ by just a single loop.

The lines in $\Delta$ contained in the plane $\mathrm{Re}(z)=0$ are shown in \cref{fig:re(z)=0plane}. Here all points in $\Delta$ are EP2s, except for points on the $c$-axis which are EP3s. This was checked by numerically following the eigenvalues along loops (the phases of the electric field were not considered). Two main features appear: a tangent intersection of two lines 
at $(0,-1)$ and a cusp at $(-4i,-1)$. The higher order zero condition implies that EP3s are confined to the $c$-axis and the cusp.

\Cref{fig:EPframe} shows what happens in the three-dimensional $(z,c)$-space, where all new lines consist of EP2s. We see that the cusp in the plane in \cref{fig:re(z)=0plane} is in fact part of a more 
complex  structure in the three-dimensional space. Here four lines move out of the plane of which two have $\mathrm{Re}(z)>0$ and the other two have $\mathrm{Re}(z)<0$. Also two additional lines of EPs appear top-right in the picture close to the central line that is already present  in \cref{fig:re(z)=0plane}.

\begin{figure}[ht]
    \centering
    \includegraphics[width=\hsize]{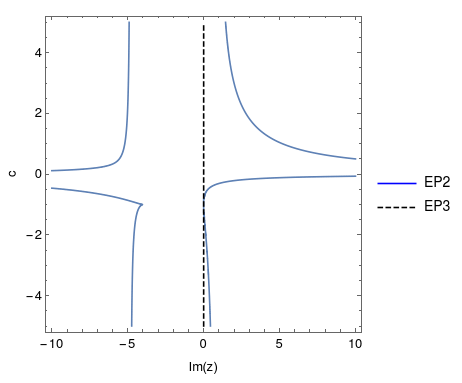}
    \caption{EP structure in the $\mathrm{Re}(z)=0$ plane.}
    \label{fig:re(z)=0plane}
\end{figure}

The tangent intersection at $(0,-1)$ and the more complex structure at $(-4i,-1)$ have another remarkable property from a $\PT$ symmetry perspective. One can define a parity operator $\P$ to swap the outer channels of the waveguide, and define a time operator $\T$ to be complex conjugation. The system is then $\PT$ symmetric at exactly three lines; the line $c=0$, the line given by $c=1$ and $z$ real, and the line given by $c=-1$ and $z$ imaginary; these were already drawn in \cref{fig:EPframe}. At the last line, $\PT$ phase transitions occur at $z=0$ and $z=-4i$, that is precisely at the tangent intersection and the point where the complex structure with several lines emerge in \cref{fig:EPframe}.

\begin{figure}
    \centering
    \includegraphics[width=\hsize]{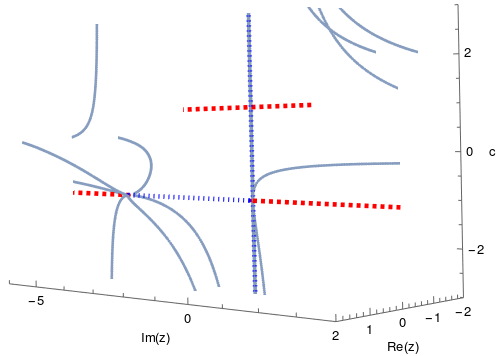}
    \caption{EP and $\PT$ structure in parameter space; the solid lines are EPs, the red blocks and blue stripes mark broken resp.\ exact $\PT$ symmetry. The picture is symmetric in $\mathrm{Re}(z)\mapsto -\mathrm{Re}(z)$.}
    \label{fig:EPframe}
\end{figure}

%------------------------------------------------------------------------------------------------------------------------------------

\subsection{The measurement}

To measure an EP, two methods stand out; one is directly tracking the eigenstates \cite{dembowski2001experimental}, which includes phase information. The other is only tracking the eigenvalues with no phase information, where the 'merging path method' is used for the identification of an EP \cite{heiss1999phases,xu2016topological}. Here one starts with a (discretized) closed path in parameter space, where one measures the eigenvalues at each point, to obtain a (discretized) locus for each eigenvalue. If no EP structure is present, the eigenvalues will individually trace out closed loops. If an EP structure is encircled, one sees the locus of one eigenvalue ending at the initial value of another one showing that eigenvalues are permuted.

Tracking eigenvalues only has clear experimental advantages; one does not need to track eigenstates adiabatically, dynamical phases are irrelevant, and slight deformation of the path yields the same permutation. The disadvantage is that the phase information may go unrecorded.

In this system, one could for fixed system parameters measure the profile of the wave in each waveguide. That is, one obtains (complex) $\tilde{E}_k(x)$ for $k=1,2,3$. Writing these in one vector $\mathbf{{\tilde{E}}}(x)$, the profiles should follow
\begin{equation}
    \mathbf{{\tilde{E}}}(x)=e^{-iHx}\mathbf{{\tilde{E}}}(x=0)
\end{equation}
in analogy to quantum mechanics. An advantage with respect to genuine quantum systems is that now gain and loss happen in space  (along the $x$-axis) and not in time.
By deducing the eigenstates $\hat E_k(x)$ (again $k=1,2,3$) theoretically, one can change to eigenstate basis. In this basis one must have 
\begin{equation}
    \hat E_k(x)=e^{-i\lambda_k x}\hat E_k(x=0).
\end{equation}
In this way the eigenvalue(s) can be obtained.

%------------------------------------------------------------------------------------------------------------------------------------

\subsection{Examples}

Let us discuss suitable paths to check the mentioned phenomena. Although other regions shown in \cref{fig:EPframe} would suffice as well, we are particularly interested in 
the region near the tangent intersection that involves both EP2s and EP3s.
The region is shown in \cref{fig:pathsneartouching}.

\begin{figure}
    \centering
    \includegraphics[width=0.9\hsize]{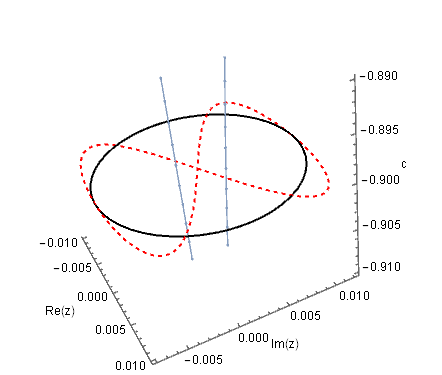}
    \caption{Paths near $(z,c)=(0,-1)$. The two blue lines are EP lines: the line $z=0$ consists of EP3s and the other line consists of EP2s. The bold black circle and dashed red figure 8 are in the plane $c=-0.9$ and can be used to experimentally verify non-abelian behaviour (see the text).}
    \label{fig:pathsneartouching}
\end{figure}

Let us first deal with the problem of concatenating loops. The relevant loops are shown in \cref{fig:composing_example}.
We deliberately take the basepoint equal in all cases, hence the slight variation on \cref{fig:theoretical_setup}. The upper EP is an EP2, the lower an EP3, taken in the plane $c=-0.9$. We  note that the situation is similar for $c$ close to this value, although the distance between the EPs varies. Hence, one can vary $c$ if it is desirable for experiment, and the discussion below will still hold.

In \cref{sfig:onlyEP2,sfig:onlyEP3} we show the employed fundamental paths $\gamma_1$ resp.\ $\gamma_2$ and find their induced permutations. As a reference, we investigate the paths $\gamma_2\gamma_1$ and $\gamma_1\gamma_2$ in resp.\ \cref{sfig:firstEP2thenEP3,sfig:firstEP3thenEP2}.
Thus, the first four pictures show the resulting permutations of eigenvalues $p_1$, $p_2$, $p_2p_1$ and $p_1p_2$, respectively. The big loop in \cref{sfig:bigcircle} is base homotopic to $\gamma_2\gamma_1$, as can be seen by pulling the left side of the loop through the area between the EPs. We observe that the permutation induced by this loop indeed equals $p_2p_1$, and does not equal $p_1p_2$. This agrees with the problem discussed in \cref{fig:basepoint_on_loop}.

Turning to the figure 8 loop in \cref{sfig:fig8}, we note that it can be deformed to $\inv{\gamma_1}\gamma_2$, so one would expect the permutation $\inv{(23)}(132)=(23)(132)=(12)$. Note that this is the same permutation as the one from $\gamma_1\gamma_2$ as $\gamma_1$ induces a transposition, yet $\gamma_1\gamma_2$ and $\inv{\gamma_1}\gamma_2$ are not homotopic. This does not contradict our claims; depending on the system, non-homotopic paths may induce the same permutation.
One can still measure orientation dependence by traversing the figure 8 in opposite direction. This loop is homotopic to $\inv{(\inv{\gamma_1}\gamma_2)}=\inv{\gamma_2}\gamma_1$, so yields the permutation $\inv{(132)}(23)=(123)(23)=(12)$. Hence this loop differs from $\gamma_2\gamma_1$ (which yields a $(13)$) only by orientation of the second part, and gives a different permutation.

\begin{figure*}
    \subfloat[Permutation $(23)$.\label{sfig:onlyEP2}
    ]{
    \adjustbox{trim={.05\width} {0\height} {0.05\width} {0\height},clip}{
    \includegraphics[width=0.49\textwidth]{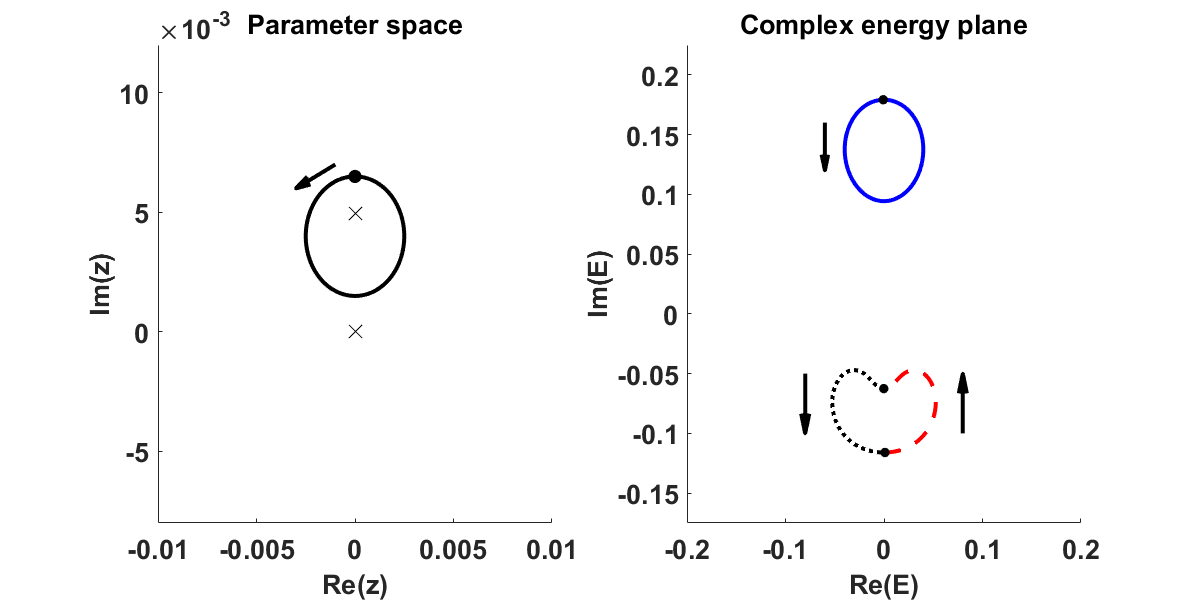}
    }
    }
    \subfloat[Permutation $(132)$.\label{sfig:onlyEP3}]{
    \adjustbox{trim={.05\width} {0\height} {0.05\width} {0\height},clip}{
    \includegraphics[width=0.49\textwidth]{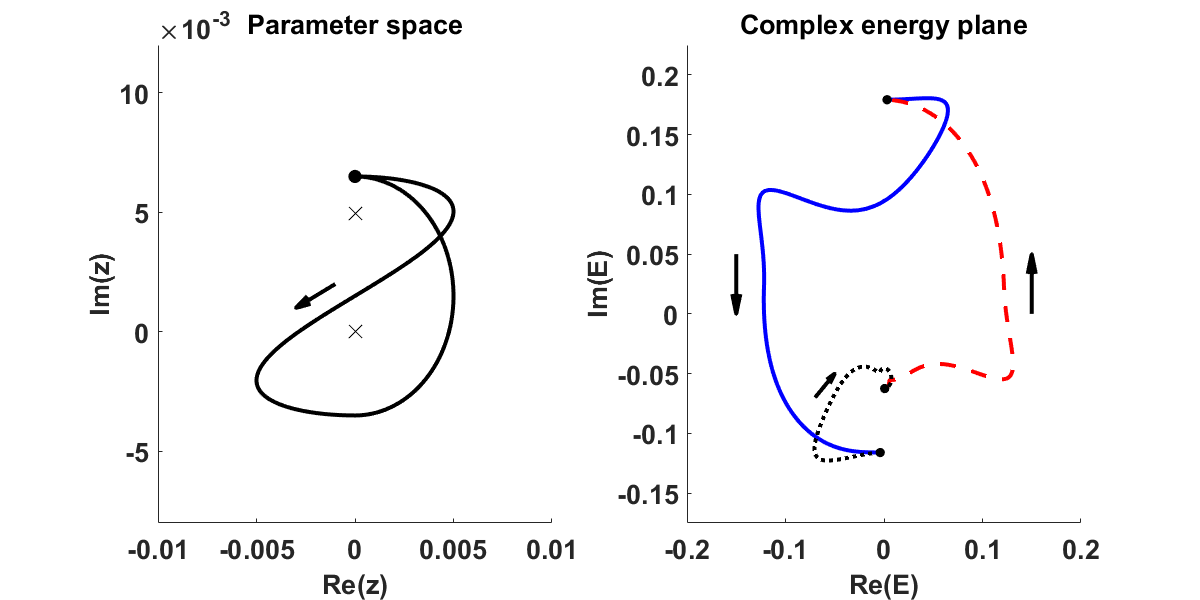}
    }
    }
    
    \subfloat[Permutation $(13)=(132)(23)$. \label{sfig:firstEP2thenEP3}]{
    \adjustbox{trim={.05\width} {0\height} {0.05\width} {0\height},clip}{\includegraphics[width=0.49\textwidth]{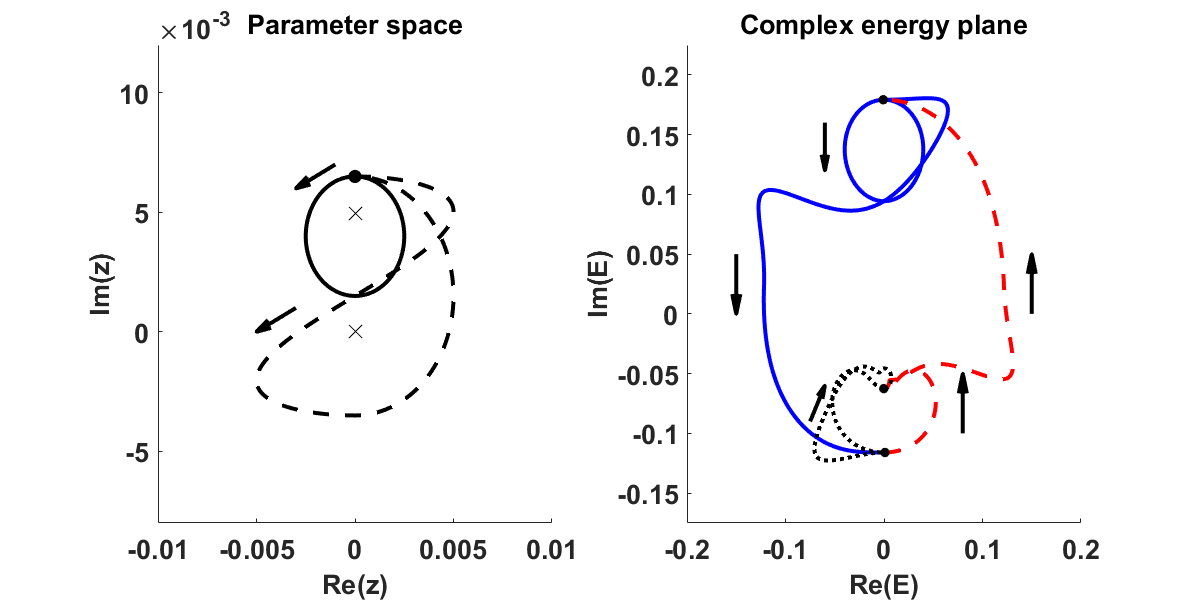}
    }
    }
    \subfloat[Permutation $(12)=(23)(132)$. \label{sfig:firstEP3thenEP2}
    ]{
    \adjustbox{trim={.05\width} {0\height} {0.05\width} {0\height},clip}{\includegraphics[width=0.49\textwidth]{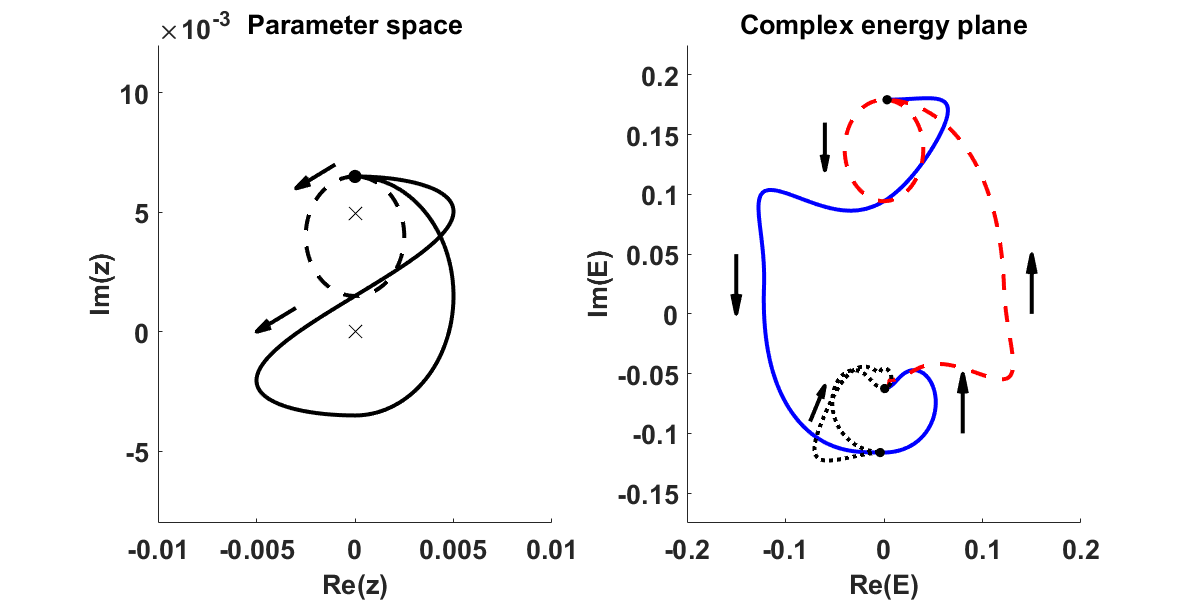}
    }
    }
    
    \subfloat[Permutation $(13)$. \label{sfig:bigcircle}
    ]{
    \adjustbox{trim={.05\width} {0\height} {0.05\width} {0\height},clip}{\includegraphics[width=0.49\textwidth]{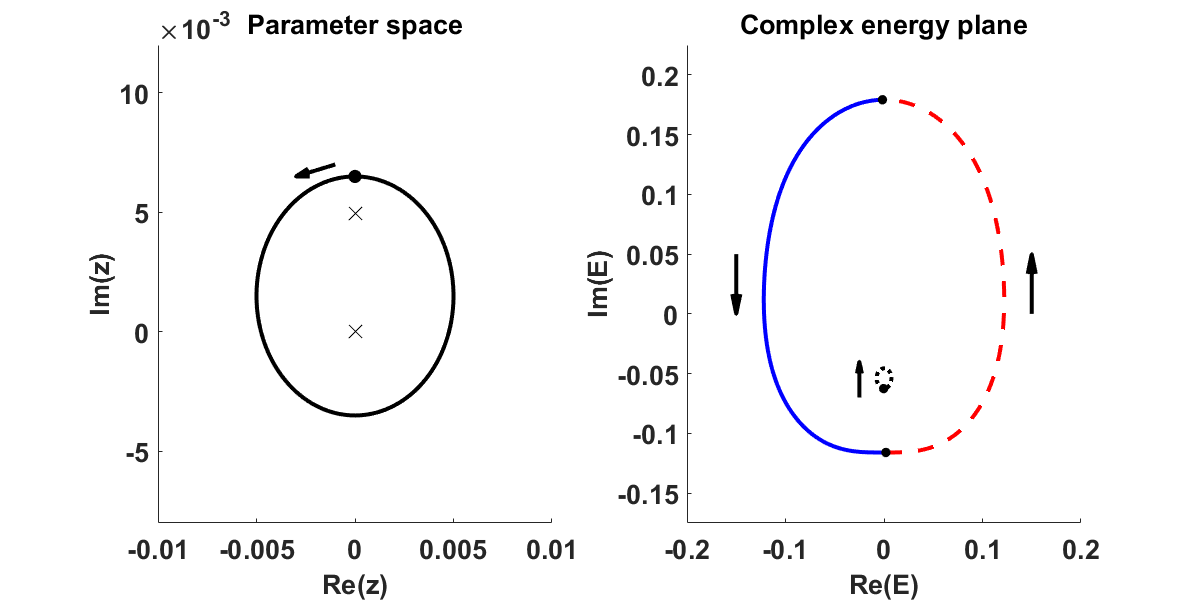}
    }
    }
    \subfloat[Permutation $(12)$. \label{sfig:fig8}]{
    \adjustbox{trim={.05\width} {0\height} {0.05\width} {0\height},clip}{\includegraphics[width=0.49\textwidth]{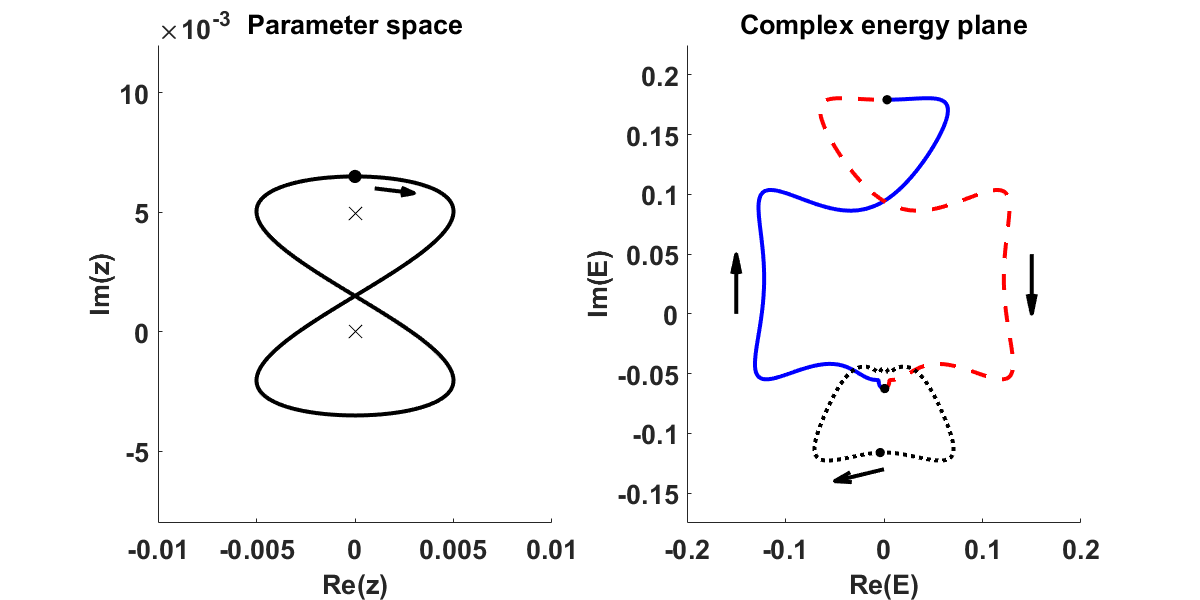}
    }
    }

    \caption{Loops in the plane plane $c=-0.9$ of the parameter space (left panels). All loops have the same basepoint, marked by a circle. The solid part of the loop is always traversed first, so before the dashed part of the loop. The orientations of the loops are indicated by arrows. The EPs are marked by crosses, where the upper one is an EP2 and the lower one is an EP3. The complex energy planes (right panels) show the resulting paths of the three eigenvalues, each drawn with its own color and style. Labelling eigenvalues top to bottom, we can read off the induced permutation given in the individual captions. One observes that the $\Lambda$-group of the system is isomorphic to $S_3$.}
    \label{fig:composing_example}
\end{figure*}

%------------------------------------------------------------------------------------------------------------------------------------

\subsection{The degree of an exceptional point in systems with more than two parameters}

Taking a closer look at the tangent intersection, one may ask the question what its degree should be. As reported in \cite{demange2011signatures}, the tangent intersection may behave as an EP2. This means that traversing a circle in the $c=-1$ plane which encircles this EP (and only this EP) yields the standard EP2 signature of swapping 2 eigenstates, as shown in \cref{fig:touchingpointasEP2}, and as expected resembles the result of \cref{sfig:bigcircle} (using the obvious relabeling). However, the four lines arrive in a topological cross, and one may take a circle that goes through the other two quadrants. In this case, one can take a plane given by $\mathrm{Im}(z)=\epsilon i$ with $\epsilon>0$ small, and take a large circle. Interestingly, this yields the standard EP3 signature, as seen in \cref{fig:touchingpointasEP3}.

\begin{figure*}
    \subfloat[The tangent intersection as EP2. \label{fig:touchingpointasEP2}
    ]{
    \includegraphics[width=0.48\textwidth]{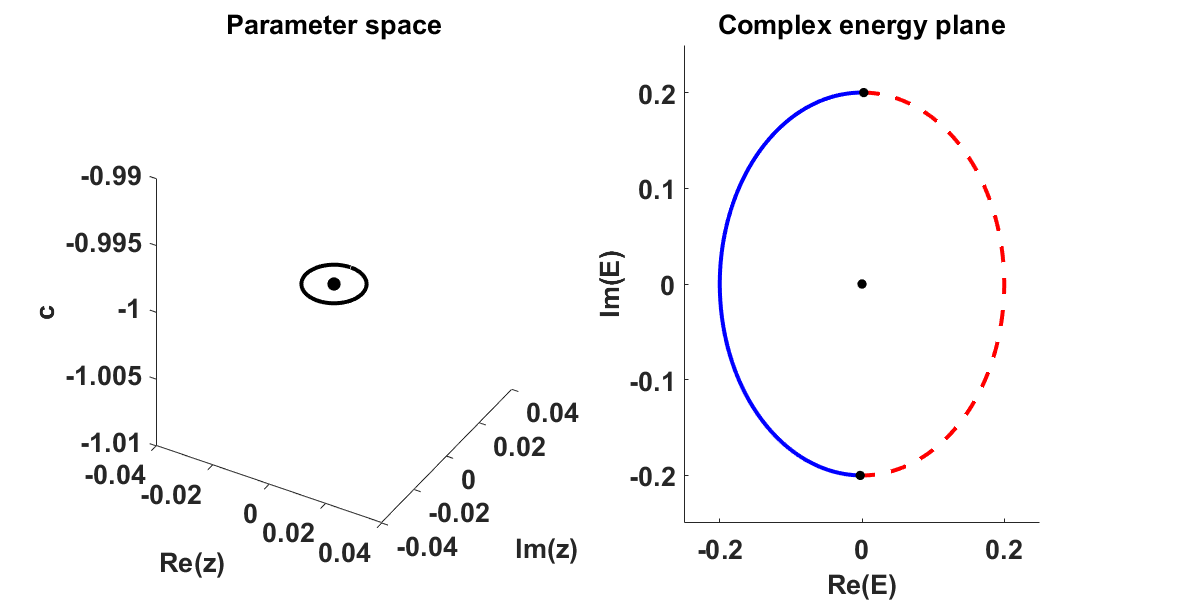}
    }
    \subfloat[The tangent intersection as EP3. \label{fig:touchingpointasEP3}
    ]{
    \includegraphics[width=0.48\textwidth]{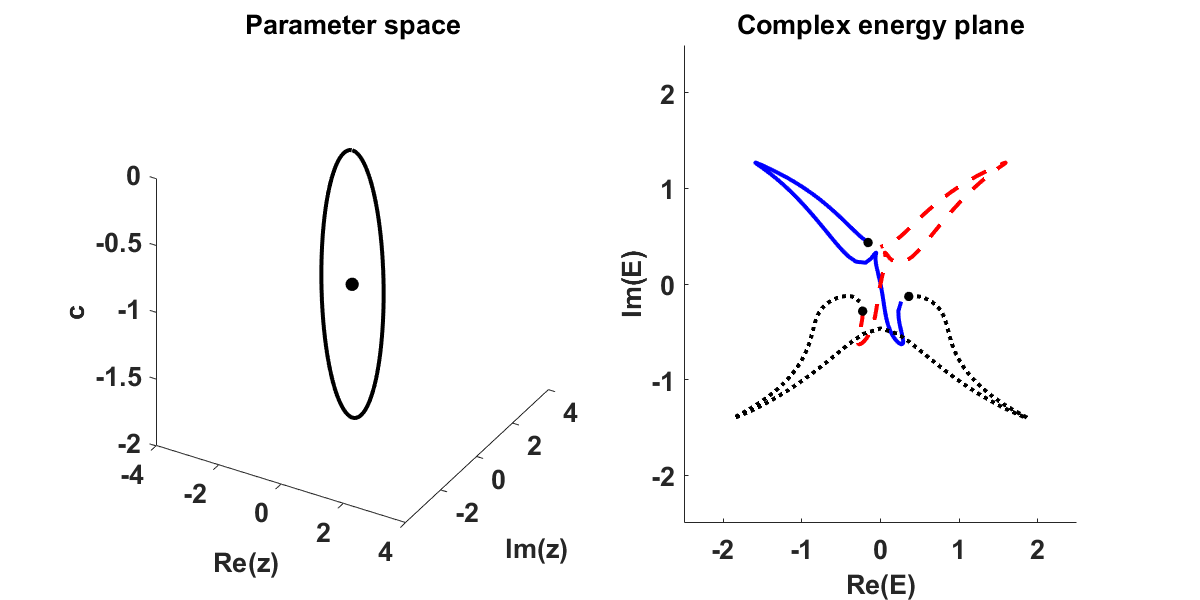}
    }

    \caption{The tangent intersection as EPs of different orders. \label{fig:ill_defined_EP_degree}}
\end{figure*}

One can now do a similar construction with the point at $(-4i,-1)$ and conclude that its degree depends on the plane. In a general parameter space of dimension $d>2$, any point where at least 3 EP lines meet has a variable degree (note that the degree of an EP is unambiguous on the lines).

Again fundamental groups provide an explanation. In case of a planar parameter space with an isolated EP, the fundamental group is $\Z$ and one has a map $\Z\to \Lambda(x_0)$. This has kernel $N\Z$, and $N$ is the degree of the EP. Now, imagine 2 distinct EP structures/lines, as we want an intersection point necessarily for $d\geq3$. The fundamental group is then the free product $\Z *\Z$, i.e.\ generated by 2 fundamental paths $\gamma_1$ and $\gamma_2$. Hence each $\gamma_i$ induces a map $\Z\to \Lambda(x_0)$, with kernel $N_1\Z$ resp. $N_2\Z$. In case $N_1\ne N_2$ clearly an issue arises, but even if $N_1=N_2$ we see that degree should be associated to a fundamental path, or equivalently some surface. At the intersection point there is simply no canonical choice.

%------------------------------------------------------------------------------------------------------------------------------------

\section{Summary}
\label{sec:summary}

We showed how one can compose the effects obtained from encircling multiple EPs, which in fact works for an arbitrary degeneracy structure. The problem of finding the correct calculation can be solved by using the theory of fundamental groups, which requires based oriented loops. A relevant result here is that permutations associated to the loops are of topological as opposed to geometric nature, hence deformation can be used for convenience in both theory and experiment.

Applications of these insights were explored in a waveguide system, of which we investigated the parameter space and identified a region where all tests could be performed. The presence of both EP2s and EP3s allows one to demonstrate the non-abelian nature of systems with multiple EPs by experimentally tracking the eigenvalues.

% bibliography

%merlin.mbs apsrev4-1.bst 2010-07-25 4.21a (PWD, AO, DPC) hacked
%Control: key (0)
%Control: author (8) initials jnrlst
%Control: editor formatted (1) identically to author
%Control: production of article title (-1) disabled
%Control: page (0) single
%Control: year (1) truncated
%Control: production of eprint (0) enabled
%

\end{document}